\newtheorem{example}{Example}
\newtheorem{remark}{Remark}
\newtheorem{definition}{Definition}
\newtheorem{theorem}{Theorem}
\newtheorem{corollary}{Corollary}
\newtheorem{lemma}{Lemma}
\newtheorem{proposition}{Proposition}
\newenvironment{proof}[1][Proof]{\noindent\textbf{#1.} }{\ \rule{0.5em}{0.5em}}
\begin{document}

\title{Bounds for complexity of syndrome decoding for poset metrics}
\author{\IEEEauthorblockN{Marcelo Firer}
\IEEEauthorblockA{Institute of Mathematics, Statistics \\ and Scientific Computing \\
University of Campinas\\
S\~ao Paulo, Brazil 13083-859\\
Email: mfirer@ime.unicamp.br}
\and
\IEEEauthorblockN{Jerry Anderson Pinheiro}
\IEEEauthorblockA{Institute of Mathematics, Statistics \\ and Scientific Computing \\
University of Campinas\\
S\~ao Paulo, Brazil 13083-859\\
Email: jerryapinheiro@gmail.com}
}
\maketitle

\begin{abstract}
In this work we show how to decompose a linear code relatively to any given poset metric. We prove that the complexity of syndrome decoding is determined by a maximal (primary) such decomposition and then show that a refinement of a partial order leads to a refinement of the primary decomposition. Using this and considering already known results about hierarchical posets, we can establish upper and lower bounds for the complexity of syndrome decoding relatively to a poset metric.
\end{abstract}
\IEEEpeerreviewmaketitle
\section{Introduction}

The \textit{poset metrics}, which are metrics determined by partial orders over a finite set, were introduced in the context of coding theory by
Brualdi et al. \cite{brualdi} in 1995. The interest in such metrics may arise
in the modeling process of the decoding-decision criteria over some channels, as in  \cite{barg1}, for the usage to perform
unequal error protection, as proposed in \cite{flp}, or the use,
combined with block-codes, in steganography, as in \cite{darriti}.

Poset metrics generalize the Hamming metric and include the case of
Niederreiter-Rosenbloom-Tsfasman (NRT) metrics. Moreover, those two examples
may be seen as extreme poset-metrics. While the Hamming metrics may be considered as a discrete
generalization of an Euclidean metric, the NRT metric with one chain is an ultra-metric,
having some properties that defy the Euclidean intuition. The hierarchical
poset metrics (which include both the cases) are well understood, with explicit formulas being given for
the most important invariants of coding theory (minimal distance, packing
radius, Singleton bound, characterization of perfect codes), presented in 
\cite{felix}. The structure of general NRT metrics is somehow understood as we can see in \cite{Muniz} and \cite{bargPruna}  or \cite{PMunizFirer}, for the case of a single chain.

When considering
hierarchical poset metrics, in \cite{felix}, it was 
described how the complexity of syndrome decoding may be significantly reduced by considering the poset-metric structure and working with a poset that is poset-metric equivalent to the given one. In the extreme case of NRT metric with one chain, syndrome decoding becomes
just a linear map. In this work we extend the results of \cite{felix} by
establishing bounds for complexity of syndrome
decoding of linear codes, considering decoding to be performed
with the use of a poset metric. The bounds are established through a
characterization of what would be a primary decomposition of a code (related to
a poset).

\section{Posets and poset-metrics}

 Let us denote $\left[n\right]  =\left\{1,2,\ldots,n\right\}$ and let $\preceq$ be a partial order
relation defined on $\left[ n\right]$. The pair $P=\left(  \left[
n\right],\preceq\right)=\left(  \left[
n\right],\preceq_P\right)$ is called a (finite) \emph{partially ordered
set} (abbreviated as \emph{poset}). An ideal in $P$ is a subset $I\subseteq P$
with the property that if $i\in I$ and $j\preceq i$, than $j\in I$. Given a
subset $X\subset I$, we denote by $\left\langle X\right\rangle$ the smallest
ideal containing $X$, called the \emph{ideal generated by} $X$.
%A finite partial order $P=\left(  \left[  n\right]  ,\preceq\right)  $ may be
%represented by its Hasse diagram, which is pictured as an oriented graph where
%vertices correspond to elements in $\left[  n\right]  $ and an (oriented) edge
%connects $i$ to $j$ iff $j\preceq i,$ $j\neq i$ and, whenever $j\preceq
%k\preceq i$ we have that $k=i$ or $k=j$. If $i$ is connected to $j$ in the
%Hasse diagram, we say $i$ \emph{s }$j$ and in this case we denote
%$j\precapprox i$. The Hasse diagram is usually pictured in the $xy$ plane in
%such a way that whenever $j\preceq i$ then the $y$ coordinate of $j$ is
%smaller then that of $i$, so that arrows are not needed.

A \emph{chain of length} $k$ in a poset $P=\left(  \left[  n\right]
,\preceq\right)$ is a labelled family $\{i_{1},\ldots,i_{k}\}\in\left[  n\right]$,
such that $i_{l}\prec i_{l+1}$ for every $l\in\{1,\ldots,k-1\}$. The chain is said to
be \emph{saturated} if every $i_{l+1}$ \emph{covers} $i_{l}$, in other words, there is no $m\in [n]$ such that $i_l \prec m \prec i_{l+1}$. The
\emph{height }$h\left(  i\right)$ in $P$ is the maximal length of a
saturated chain that has $i$ as a maximal element and the \emph{height} $h:=h\left(  P\right)  $ is the maximal height of its elements. 
%\[
%h:=h\left(  P\right)  =\max\left\{  h\left(  i\right)  |i\in\left[  n\right]
%\right\}  \text{.}%
%\]
We denote 
\[
H_{i}:=\left\{  j\in\left[  n\right]  |h\left(  j\right)  =i\right\}
\]
an call it the $i$\emph{-th level of }$\left[  n\right]  $. It is easy to see
that $\left[  n\right]  $ has a \textit{level decomposition} obtained by disjoint union of levels
\[
\left[  n\right]  =\bigcup_{i=1,\ldots,h}^\circ H_{i}\text{.}%
\]
In this case, we say that $P$ has (height) structure \emph{of type} $\left(
n_{1},\ldots,n_{h}\right)$, where $n_{i}=\left\vert H_{i}\right\vert$. Given an integer $h \leq n$, if we denote
\[
\mathcal{P}\left(  n;n_{1},\ldots,n_{h}\right)  =\left\{  P\in\mathcal{P}\left(
n\right)  |P\text{ has type }\left(  n_{1},\ldots,n_{h}\right)\right\},
\]
we have a partition of $\mathcal{P}\left(  n\right)  $, the set of all posets on $[n]$, into different
\emph{height components} (or just $h$-\emph{components}). Obviously, $\mathcal{P}\left(  n;n_{1},\ldots,n_{h}\right)\neq \emptyset$ \textit{iff} $n=n_1+\cdots +n_h$.

Given two posets $P=\left(  \left[  m\right]  ,\preceq_{P}\right)  $ and
$Q=\left(  \left[  n\right]  ,\preceq_{Q}\right)  $, an \emph{order
homomorphism} from $P$ into $Q$ is a map $\phi:\left[  m\right]
\rightarrow\left[  n\right]  $ such that $i\preceq_{P}j$ implies $\phi\left(
i\right)  \preceq_{Q}\phi\left(  j\right)  $. An isomorphism of order is a
bijective map $\phi$ such that both $\phi$ and its inverse $\phi^{-1}$ are
homomorphisms, or equivalently, a bijection \ $\phi:\left[  n\right]
\rightarrow\left[  n\right]  $ such that $i\preceq_{P}j$ if and only if
$\phi\left(  i\right)  \preceq_{Q}\phi\left(  j\right)  $. In case $P=Q$, we
say that an isomorphism $\phi:\left[  m\right]  \rightarrow\left[  n\right]  $
is an automorphism. The group of automorphisms of $P$ is denoted by
$\mathrm{Aut}_{P}$.

Given two posets $P=\left(  \left[  n\right]  ,\preceq_{P}\right)$ and
$Q=\left(  \left[  n\right]  ,\preceq_{Q}\right)$ over the same set $\left[
n\right] $, we say that $P$ \emph{is finer (or smaller) then} $Q$ (and
write $P\leq Q$) if $i\preceq_{P}j$ implies $i\preceq_{Q}j$. Equivalently,
$P\leq Q$ if and only if the identity map is an order homomorphism from $P$ into
$Q$. With this relation, the set $\mathcal{P}$ of partial orders over
$\left[  n\right],$%
\[
\mathcal{P}:=\mathcal{P}_{n}=\left\{  P=\left(  \left[  n\right]  ,\preceq
_{P}\right)  ,\leq\right\}
\]
is itself a partially ordered set. The trivial order ($i\preceq j$ if, and only if, $i=j$) is
the (unique) minimal element in $\mathcal{P}$. The $n!$ linear orders
$\sigma (1)\preceq \sigma (2)\preceq\cdots\preceq \sigma (n)$ (where $\sigma $ is a permutation of $[n]$) are the
(isomorphic) maximal elements in $\mathcal{P}$. Since a linear order defines a unique
chain, a linear order is also known as \emph{chain }and the trivial order may
be called an \emph{anti-chain}.

\begin{example}
(Hierarchical order) Let $P$ be a poset having level decomposition
$\left[  n\right]  =\bigcup_{i=1}^h H_{i}$ with $n_{i}%
=\left\vert H_{i}\right\vert $.  We say that the poset $P$ is \emph{hierarchical of
type }$(n_{1},n_{2},\ldots,n_{h})$ if $i\preceq j$ only when 
$i=j$ or $i\in H_{l_{i}},j\in H_{l_{j}}$ and $l_{i}<l_{j}$. Note that when $h=1$, we have a trivial (and minimal) order and when $h=n$, we have a linear (and maximal) order.
\end{example}

%\begin{example}
%[Multi-chain or NRT] If $\left[  n\right]  $ may be partitioned as a disjoint
%union $\left[  n\right]  =\bigcup_{i=1,\ldots,r}M_{i}$ where each $M_{i}$ is a
%chain (total order) relatively to a poset $P$, then we say $P$ is a
%multi-chain. Of special interest are the multi chains poset in which all
%$M_{i}$ have the same cardinality $\left\vert M_{i}\right\vert =n/r=s$. In
%this case the order may be represented as $\left[  n\right]  =\left(
%i,j\right)  _{\substack{i=1\ldots,r\\j=1,\ldots,s}}$ in such a way that $\left(
%i,j\right)  \preceq\left(  k,l\right)  $ iff $i=k$ and $j\leq l$. We remark
%that, when $r=n$ each chain has height $1$ hence we have a trivial (and
%minimal) order \ and when $r=1$ (hence each $s=n$) we have a linear (and
%maximal) order .
%\end{example}

If we consider an $h$-component $\mathcal{P}\left(  n;n_{1},\ldots,n_{h}\right)
$ of $\mathcal{P}$, any of its maximal element is a hierarchical poset and all
those maximal posets are isomorphic.
%, hence, all have the same (unlabelled)
%Hasse diagram. Given $P,Q\in\mathcal{P}\left(n;n_{1},\ldots,n_{r}\right)$, we
%say $P\leq Q$ iff $i\preceq_{P}j\implies i\preceq_{Q}j$. We say that
%$Q$ \emph{covers} $P$ (in $\mathcal{P}\left(  n;n_{1},\ldots,n_{r}\right)$) if
%$P\leq Q$ and there is no $R$ (in $\mathcal{P}\left(  n;n_{1},\ldots,n_{r} 
%\right)$) such that $P<R<Q$.

\subsection{Poset metrics}

Let $\mathbb{F}_{q}$ be a finite field with $q$ elements and $\mathbb{F} 
_{q}^{n}$ the $n$-dimensional vector space of $n$-tuples over $\mathbb{F}_{q}$. %We
%shall now describe how a poset determines a metric on $\mathbb{F}_{q}^{n}$.
Given $x=\left( x_{1},x_{2},\ldots,x_{n}\right)  \in\mathbb{F}_{q}^{n}$ , the
\textit{support} of $x$ is the set $\mathrm{supp}(x)=\{i\in [n]|x_{i}\neq0\}$. Let $P=\left(  \left[  n\right],\preceq\right)$ be a poset and define the
$P$-weight of $x$ as
\[
\omega_{P}(x):=\left\vert \langle\mathrm{supp}(x)\rangle\right\vert
\]
where $\left\vert \cdot\right\vert $ denotes the cardinality of the given set.

The function $d_{P}:\mathbb{F}_{q}^{n}\times\mathbb{F}_{q}^{n}\longrightarrow
\mathbb{N}$, defined by
\[
d_{P}(x,y)=\omega_{P}(x-y)
\]
is a metric on $\mathbb{F}_{q}^{n}$ called a \textit{poset-metric} or
$P$\textit{-metric}, endowed with such a metric, $\mathbb{F}_{q}^{n}$ is
called a \emph{poset space} or a $P$\emph{-space}.
%
%\begin{example}
%[Minimal and maximal posets] When $P$ is a trivial poset (minimal case -
%anti-chain order), then $|\langle\mathrm{supp}(x)\rangle|=|\mathrm{supp}(x)|$ hence $\omega_{P}$ and $d_{P}$ are just the usual Hamming weight
%and distance, we may occasionally write $\omega_{H}$ and $d_{H}$. If
%$P$ is a linear order (maximal case - chain order) with $1\preceq2\preceq
%\cdots\preceq n$ then $\omega_{P}\left(  x\right)  =\max\left\{  i|x_{i} \neq0\right\}$ and $d_{P}\left(  x,y\right)  =\max\left\{  i|x_{i}\neq y_{i}\right\}$.
%\end{example}

\begin{example}
(Hierarchical poset) Let $P$ be a hierarchical poset of type $\left(  n_{1},n_{2},\ldots,n_{h}\right)$. Given
$x\in\mathbb{F}_{q}^{n}$ let $m=m\left(  x\right)  =\max\left\{
i|\mathrm{supp}(x)\cap H_{i}\neq\emptyset\right\}$. Then, the $P$-weight of
$x$ is obtained by considering the Hamming weight of the intersection
$\mathrm{supp}(x)\cap H_{m}$ to which we need to add the size of the previous
levels in the hierarchy, that is
\[
\omega_{P}\left(  x\right)  =\left\vert \mathrm{supp}(x)\cap H_{m}\right\vert
+\left(  n_{1}+n_{2}+\cdots+n_{m-1}\right)  \text{.}
\]
As a particular case, when $h(P)$ is minimal ($h(P)=1$), we have that $|\langle\mathrm{supp}(x)\rangle|=|\mathrm{supp}(x)|$ hence $\omega_{P}$ and $d_{P}$ are just the usual Hamming weight
and distance. When $h(P)$ is maximal ($h(P)=n$),
$P$ is a linear order and assuming $1\preceq2\preceq
\cdots\preceq n$ we have that  $\omega_{P}\left(  x\right)  =\max\left\{  i|x_{i} \neq0\right\}$ and $d_{P}\left(  x,y\right)  =\max\left\{  i|x_{i}\neq y_{i}\right\}$.

\end{example}

%\begin{example}
%[NRT poset] Let us consider an NRT poset having $r$ chains, each chain with $s$
%elements. We represent $\left[  n\right]  =\left(  i,j\right)
%_{\substack{i=1\ldots,r\\j=1,\ldots,s}}$ \ by considering the map $\left(
%i,j\right)  \mapsto j+\left(  i-1\right)  s$. Given $x=\left(  x_{ij}\right)
%\in\mathbb{F}_{q}^{n}$ we define (following \cite{bargPruna}) $\mathrm{shape}%
%\left(  x\right)  =\left(  \sigma_{1}\left(  x\right)  ,\ldots,\sigma_{s}\left(
%x\right)  \right)  $ where
%\[
%\sigma_{l}\left(  x\right)  =\left\vert \left\{  i|l=\max\left\{  j|x_{ij}%
%\neq0\right\}  \right\}  \right\vert \text{.}%
%\]
%that is, looking at $\left(  x_{ij}\right)$ as a matrix, for each row we
%mark the last column with non-zero entry and $\sigma_{l}\left(  x\right)$ is
%the number of times the column $l$ was marked. With this notation we have
%that
%\[
%\omega_{P}\left(  x\right)  =\sum_{j=1}^{s}j\times\sigma_{j}\left(  x\right)
%\text{.}
%\]
%
%\end{example}

%
%\begin{figure}[ptbh]
%\centering
%\includegraphics{C:/Users/mfirer/Documents/Artigos/Luvifelix/Hasse.jpg}
%\caption{Examples of hierarchical posets: anti-chain, chain and "general"
%hierarchical poset}%
%\label{fig:Hasse}%
%\end{figure}

We remark that the family of hierarchical posets, in which we are concerned
in this work, is a large family of posets. Indeed, to define a hierarchical
poset on $[n]$ is equivalent, up to isomorphism,  to define a positive partition of $n$ and the
number $p(n)$ of such partitions behaves asymptotically as $\frac{1}{4n\sqrt{3}}e^{\pi\sqrt{2n/3}}$ (see \cite{hardy}). 
%The family of NRT posets
%and metrics is much more restricted, since up to isomorphism, the choice of an
%NRT over $\left[  n\right]  $ amount to the choice of an integer divisor $r$
%of $n$ and the number $\varphi\left(  n\right)$ of positive divisors of $n$
%behaves as $n$, that is $\lim\sup_{n\rightarrow\infty}\frac{\varphi\left(
%n\right)  }{n}=1$ (\cite{HW}).

\subsection{Linear Isometries and equivalence of codes}

Our primary concern is with linear codes, so our appropriate equivalence will be dictated by linear isometries. Let us denote $G_{P}:=GL_{P}\left(  n\right)  =\left\{  T\in GL\left(  \mathbb{F}_{q}^{n}\right)
|T\text{ is a }P\text{-isometry}\right\}$, the group of linear
isometries of $\left(  \mathbb{F}_{q}^{n},d_{P}\right)$, where $GL\left(
\mathbb{F}_{q}^{n}\right)$ is the group of invertible linear transformations
of $\mathbb{F}_{q}^{n}$. 

%We remark there is no relation between the order on
%$\mathcal{P}$ and the inclusion order on those subgroups of $GL\left(
%\mathbb{F}_{q}^{n}\right)$.

\begin{remark}
\label{iso} From \cite{groups} we have that $GL_{P}\left(  n\right)  =S_{P}\ltimes\Delta_{P}$,
where $S_{P}$ is the permutation of coordinates corresponding to
automorphisms of the poset $P$, and $\Delta_{P}$ is the group of invertible
triangular matrices such that entry $ij$ is zero if 
$i\npreceq_{P}j$. If $P \leq Q$ then
$\Delta_{P}\subseteq\Delta_{Q}$ but we do not necessarily have $S_{Q}\subseteq
S_{P}$. We denote by $Aut\left(  P\right)$ the group of order
automorphisms of $P$. Given $\sigma\in Aut\left(  P\right)$, 
$T_{\sigma}(x)=(x_{\sigma (1)},\ldots ,x_{\sigma (n)})$ is a $P$-isometry.
\end{remark}

An $\left[  n,k,\delta_{P}\right]  _{q}\ \ $\emph{poset code} is a
$k$-dimensional subspace $C\subset\mathbb{F}_{q}^{n}$, where
$\mathbb{F}_{q}^{n}$ is considered to be a $P$-space, and $\delta
_{P}=\delta_{P}(C)=\min\{\omega_{P}(x)|0\not =x\in C\}$ is
the \emph{minimal} $P$\emph{-distance} of the code $C$.
When $\delta_{P}$ does not play a relevant role, we will write
just $\left[n,k\right]_{q}$.

An $\left[n,k,\delta_{P}\right]  _{q}$ code $C\subset
(\mathbb{F}_{q}^{n},d_{P})$ and an $\left[n,k,\delta_{P^{\prime}}\right]
_{q}$ code $C^{\prime}\subset(\mathbb{F}_{q}^{n},d_{P^{\prime}})$
are said to be \emph{equivalent} if there is a linear isometry $T:(\mathbb{F}%
_{q}^{n},d_{P})\longrightarrow(\mathbb{F}_{q}^{n},d_{P^{\prime}})$ such that
$T(C)=C^{\prime}$. In particular we must have $P$ and
$P^{\prime}$ order-isomorphic. We allow $P$ and $P^{\prime}$ to be different
but isomorphic posets to allow different labellings. If we consider, for
example, the posets $P$ and $P^{\prime}$ to be generated by the relations
$1\preceq_{P}2\preceq_{P}\cdots\preceq_{P}n$ and $n\preceq_{P^{\prime}%
}n-1\preceq_{P^{\prime}}\cdots\preceq_{P^{\prime}}1$, then we have that the
one-dimensional codes $C$ and $C^{\prime}$, generated respectively by the
canonical vectors $e_{n}$ and $e_{1}$ are equivalent when considering the poset metric $d_{P}$
and $d_{P^{^{\prime}}}$ to evaluate $C$ and $C^{\prime}$ respectively.

We should remark that up to order-isomorphism a hierarchical poset is
determined by the values of $\left(  n;n_{1},\ldots,n_{h}\right)  $. Hence, we
may assume a \emph{natural labelling} of the poset:  $i\preceq j$ implies $i\leq j$. In our case it means that $H_{i}=\{ n_1+\cdots + n_{i-1}+1,\ldots,n_{1}+\cdots +n_{i-1}+n_{i}\}$ for every $i\in  [h(P)]$.

\section{Partitions and Decompositions}

Let us consider a partition $J=\cup_{i=1}^{r}J_{i}$ of a subset $J\subseteq\left[  n\right]$. We
will denote this partition by $\mathcal{J}=(  J;J_{i})_{i=1}^r$. If we write
$J_{0}=\left[  n\right]  \backslash J=\left\{  i\in\left[  n\right]  ;i\notin
J\right\}  $ we get a \emph{pointed partition} $\mathcal{J}^{\ast}=\left(
J;J_0;J_{i}\right)_{i=1}^r=\left(J_0;J_{i}\right)_{i=1}^r$ where $J_{0}=\emptyset$ \emph{iff
}$J=\left[  n\right]$. We remark that $J_{0}$ has a
special role, since it is the unique part we allow to be empty. From now on, we consider only pointed partitions, so we omit the symbol $^{\ast}$ and the adjective ``pointed". A
 partition $\mathcal{J}$  can be
refined in two ways, either by increasing the number of parts or by enlarging
the distinguished part $J_{0}$. We remark that excepts for the pointer
$J_{0}$, the order of the other parts is irrelevant, so that
\[
\left(  J_{0};\left\{  1,2\right\}  ,\left\{  3,4,5\right\}  \right)  =\left(
J_{0};\left\{  5,4,3\right\}  ,\left\{  1,2\right\}  \right)  \text{.}%
\]

\begin{definition}
An $l$\emph{-split of a  partition} $\mathcal{J}=\left(  J_{0}%
;J_{i}\right)  _{i=1}^{r}$ is a  partition $\mathcal{J}^{\prime
}=\left(  J_{0};J_{i}^{\prime}\right)  _{i=1}^{r+1}$ where $J_{i}%
=J_{i}^{\prime}$ for $i\neq l$ and $J_{l}=J_{l}^{\prime}\cup J_{r+1}%
^{\prime}$, where both $J_{l}^{\prime}$ and $J_{r+1}^{\prime}$  are non-empty.
It means that we have just split $J_{l}$ into two components. An $l$%
\emph{-aggregate }of a partition $\mathcal{J}=\left(  J_{0};J_{i}\right)
_{i=1}^{r}$ is a partition $\mathcal{J}^{\prime}=\left(  J_{0}^{\prime}%
;J_{i}^{\prime}\right)  _{i=1}^{r}$ where $J_{i}=J_{i}^{\prime}$ for $i\neq
l,0$ and $J_{l}=J_{l}^{\prime}\cup J_{l}^{\ast}$, and $J_{0}^{\prime}=J_{0}\cup
J_{l}^{\ast}$ for some $\emptyset\neq J_{l}^{\ast}\subset J_{l}$, that is,
some elements of $J_{l}$ are aggregated into the distinguished part $J_{0}$.
\end{definition}

\begin{definition}
We say that a partition $\mathcal{J}^{\prime}$ is a $1$\emph{-step refinement
of} $\mathcal{J\,}$\ if $\mathcal{J}^{\prime}$ is obtained from $\mathcal{J}$
by performing a single $l$-split or a single $l$-aggregate operation, for some
$l<\left\vert \mathcal{J}\right\vert $. $\mathcal{J}^{\prime}$ is a \emph{
refinement of} $\mathcal{J\,}$\ if $\mathcal{J}^{\prime}$ can be obtained
from $\mathcal{J}$ by a successive number of $1$-step refinements. We use the
notation $\mathcal{J}^{\prime}\leq\mathcal{J}$ to denote a refinement and
$\mathcal{J}^{\prime}\leq_{l}\mathcal{J}$ to denote that $\mathcal{J}^{\prime
}$ is a $1$-step refinement of $\mathcal{J}$ performed by an $l$-split or
$l$-aggregate. In case the operation is important to be specified, we will use
the notation $\mathcal{J}^{\prime}\leq_{l}^{s}\mathcal{J}$ and $\mathcal{J}%
^{\prime}\leq_{l}^{a}\mathcal{J}$ for a $l$-split or a $l$-aggregate, respectively.
\end{definition}

\begin{example}
The partition $([4];\emptyset ; \{1,2,3,4\})$ can be refined to the partition $([4];\{1,3\};\{2\},\{4\})$, indeed,
\begin{align*}
\left(  \emptyset;\left\{  1,2,3,4\right\}  \right)  & \geq_{1}^{a}\left( 
\left\{  3\right\}  ;\left\{  1,2,4\right\}  \right) \\ & \geq_{1}^{a}\left(
\left\{  1,3\right\}  ;\left\{  2,4\right\}  \right)  \geq_{1}^{s}\left(
\left\{  1,3\right\}  ;\left\{  2\right\}  ,\left\{  4\right\}  \right).
\end{align*}
\end{example}
%\begin{align}
%\left(  \emptyset;\left\{  1,2,3,4\right\}  \right)   &  \geq_{1}^{s}\left(
%\emptyset;\left\{  1,2,4\right\}  ,\left\{  3\right\}  \right)  \geq_{2}%
%^{a}\left(  \left\{  3\right\}  ;\left\{  1,2,4\right\}  \right)
%\label{ref_ex2}\\
%&  \geq_{1}^{a}\left(  \left\{  1,3\right\}  ;\left\{  2,4\right\}  \right)
%\geq_{1}^{s}\left(  \left\{  1,3\right\}  ;\left\{  2\right\}  ,\left\{
%4\right\}  \right)
%\end{align}

As we will see now, a decomposition of a linear code $C\subseteq \mathbb{F}_{q}^{n}$ is the algebraic
equivalent of the set partition of $\left[  n\right]  $.

\begin{definition}
\label{code_decomposition}We say that $\mathcal{C=}\left(  C;C_{0}%
;C_{i}\right)  _{i=1}^{r}$ is a \emph{decomposition of} $C$ if each
$C_{i}$ is a subspace of $\mathbb{F}_{q}^{n}$ such that:

\begin{itemize}
\item $C=\oplus_{i=1}^{r}C_{i}$ with $\dim\left(C_{i}\right)  >0$ for every $i\in\{1,\ldots,r\}$;
\item $C_{0}=\left\{  \left(  x\,_{1},x_{2},\ldots,x_{n}\right)  |x_{i}=0\text{ if }i\in\mathrm{supp}\left(  C\right)  \right\}$;
\item $\left(\mathrm{supp}\left(  C_{0}\right)  ;\mathrm{supp}\left(
C_{i}\right)  \right)  _{i=1}^{r}$ is a  pointed partition, where
\end{itemize}
\begin{equation*}
\mathrm{supp}\left(  C_{i}\right)   =\left\{  j\in\left[  n\right]
|x_{j}\neq0\text{ for some }\left(  x_{1},\ldots,x_{n}\right)  \in
C_{i}\right\} .
\end{equation*}
%In such a case, each $C_{i}$ is called a \emph{factor} of $C$.
\end{definition}

\begin{definition}
An $l$-split, $l$-aggregate, $1$-step refinement and a refinement $\mathcal{C}%
^{\prime}=\left(  C;C_{0}^{\prime};C_{i}^{\prime}\right)  _{i=1}^{r\prime}$ of
a decomposition $\mathcal{C=}\left(  C;C_{0};C_{i}\right)  _{i=1}^{r}$ are
defined according to
\[
\left(  \mathrm{supp}\left(  C_{0}^{\prime}\right)  ;\mathrm{supp}\left(
C_{i}^{\prime}\right)  \right)  _{i=1}^{r\prime}%
\]
being a $l$-split, $l$-aggregate, $1$-step refinement or a refinement of
$\left(  \mathrm{supp}\left(  C_{0}\right)  ;\mathrm{supp}\left(
C_{i}\right)  \right)  _{i=1}^{r}$.
\end{definition}

Fixing the canonical base $\beta=\left\{  e_{1},\ldots,e_{n}\right\}  $ of $\mathbb{F}_q^n$, and given $I\subset\left[
n\right]$, the $I$-\emph{coordinate subspace} $V_I$ is    
\[
V_I=\mathrm{span}(\left\{e_i |i\in I\right\})=\left\{  \sum_{i\in I}x_{i}e_{i}|x_i\in \mathbb{F}_q\right\} .
\]
%$we say that 
%$U\subseteq \mathbb{F}_q^n$ is a \emph{coordinate subspace} if $U$ is generated by a subset
%of $\beta$. We say that $U$ is a \emph{subspace of type} $I$ if $U=\left\{  \sum_{i\in I}x_{i}e_{i}|x_i\in \mathbb{F}_q\right\} $ where $I\subset\left[
%n\right]$.
%, that is,
%\[
%U=\left\{  \sum_{i\in I}x_{i}e_{i}|x_{i}\in\mathbb{F}_{q}\right\}  \text{.}%
%\]

It is clear that a code $C$ admits a \emph{maximal} decomposition (a decomposition that does not admit a refinement)
$(C;C_0;C_i)_{i=1}^r$ and then we say that $C$ has $r$ components. Given
such a decomposition, we define
\begin{center}
$
V_{i}:= V_{\textrm{supp}(C_i)}=\mathrm{span}(\left\{e_i|i\in \mathrm{supp}\left(
C_{i}\right)\right\} ) $
\end{center}
to be the \emph{support-space of (the component) }$C_{i}$. We consider
$\left[  n\right]  _{C}=$\textrm{supp}$\left(  C\right)  $ and $\left[
n\right]  ^{C}=\left[  n\right]  \backslash\left[  n\right]  _{C}$. In the case that $\left[  n\right]  ^{C}\neq\emptyset$ we  write 
$V_{0}=\{  \sum_{j\in\left[  n\right]  ^{C}}\lambda_{j}e_{j}|\lambda
_{j}\in\mathbb{F}_{q}\}$ and denote $C_{0}=V_{0}$. We
say that the decomposition $(C;C_0;C_i)_{i=1}^r$ is supported by the
\emph{environment decomposition }$(V_0;V_i)_{i=1}^r$. In case $\left[
n\right]  ^{C}=\emptyset$, we have $V_{0}=C_{0}=\left\{  0\right\}  $.

Until now, the metric $d_{P}$ to be considered  played no
role in the decomposition of a code. We will introduce now a decomposition
that depends of the poset $P$. Given a linear code $C$, we denote by
$G_{P}\left(  C\right)  $ its orbit under $G_{P}$. Since $G_{P}$ is a group,
we have that $G_{P}\left(  C\right)  =G_{P}\left(  C^{\prime}\right)  $
\emph{iff} there is $T\in G_{P}$ such that $T\left(  C\right)  =C^{\prime}$.
We remark that orbits are equivalence classes, so we write $C\sim_{P}%
C^{\prime}$ if $G_{P}\left(  C\right)  =G_{P}\left(  C^{\prime}\right)  $.

\begin{definition}
\label{P_decomposition} A $P$-\emph{decomposition of} $\emph{C}$ is a decomposition 
$\mathcal{C}_{P}^{\prime}=\left(  C^{\prime};C_{0}^{\prime};C_{i}^{\prime
}\right)  _{i=1}^{r}$ of
$C^{\prime}$ (as in Definition \ref{code_decomposition}) where $C^{\prime}\sim_{P}C$. Each $C_{i}^{\prime}$ is called a \emph{component }of the decomposition. A 
\emph{ trivial }$P$\emph{-decomposition} of $C$ is either the decomposition
$\left(  C;C_{0};C\right)  $ or any $P$-decomposition with a unique factor
$\left(  C^{\prime};C_{0}^{\prime};C^{\prime}\right)  $ where $\left\vert
\mathrm{supp}\left(  C^{\prime}\right)  \right\vert =\left\vert \mathrm{supp}%
\left(  C\right)  \right\vert $ and $\left\vert \mathrm{supp}\left(
C_{0}^{\prime}\right)  \right\vert =\left\vert \mathrm{supp}\left(
C_{0}\right)  \right\vert $.
\end{definition}

\begin{definition}
A code $C$ is said to be $P$\emph{-irreducible} if it does not admit a
non-trivial $P$-decomposition.
\end{definition}

Given a $P$-decomposition $\mathcal{C}^{\prime
}\mathcal{=}\left(  C^{\prime};C_{0}^{\prime};C_{i}^{\prime}\right)
_{i=1}^{r}$  of $C$, we have that if $C_{i}^\prime=\left\{  0\right\}  $ then $ i=0$ and  
$\left[  n\right]  =\cup_{i=0}^{r}$\textrm{supp}$\left(  V_{i}\right)  $. Also, if each $C^\prime_i$ is $P$-irreducible, putting $n_{i}=\dim V_{i}$
and $k_{i}=\dim C_{i}^{\prime}$, it is clear that
\begin{equation*}
\sum_{i=0}^{r}n_{i}   =n=\dim \mathbb{F}_q^n \ \text{ and } \ 
\sum_{i=1}^{r}k_{i}=k=\dim C.
\end{equation*}

Let us consider two $P$-decompositions $\mathcal{C}^{\prime}=\left(
C^{\prime};C_{0}^{\prime};C_{i}^{\prime}\right)  _{i=1}^{r\prime}$ and
$\mathcal{C}^{\prime\prime}=\left(  C^{\prime\prime};C_{0}^{\prime\prime
};C_{i}^{\prime\prime}\right)  _{i=1}^{r^{\prime\prime}}$ of a given code. Associated to those
$P$-decompositions there are two  partitions of $\left[  n\right]  $,
namely $\left(  \mathrm{supp}\left(  C_{0}^{\prime}\right)  ;\mathrm{supp}%
\left(  C_{i}^{\prime}\right)  \right)  _{i=1}^{r\prime}$ and $\left(
\mathrm{supp}\left(  C_{0}^{\prime\prime}\right)  ;\mathrm{supp}\left(
C_{i}^{\prime\prime}\right)  \right)  _{i=1}^{r^{\prime\prime}}$ respectively.
By the definition of a $P$-decomposition there are isometries $T^{\prime
},T^{\prime\prime}\in G_{P}$ such that $T^{\prime}\left(  C\right)  =
C^{\prime}$ and $T^{\prime\prime}\left(  C\right)  =C^{\prime\prime}$. Let
us denote $T=T^{\prime\prime}\circ\left(  T^{\prime}\right)  ^{-1}$ and
$T^{-1}=T^{\prime}\circ\left(  T^{\prime\prime}\right)  ^{-1}$. It is known
\cite{groups} that $T$ induces an automorphism of order $\sigma
_{T}:\left[  n\right]  \rightarrow\left[  n\right]  $ and $\sigma_{T}$ induces
a map on the  partition of $\left[  n\right]  $ determined by the
$P$-decomposition $\mathcal{C}^{\prime}$, namely,
\begin{align*}
\sigma_{T}[  \left(  \mathrm{supp}\left(  C_{0}^{\prime}\right)
;\mathrm{supp} \right. & \left.\left(  C_{i}^{\prime}\right)  \right)  _{i=1}^{r\prime}]
= \\
& \left(  \sigma_{T}\left(  \mathrm{supp}\left(  C_{0}^{\prime}\right)
\right)  ;\sigma_{T}\left(  \mathrm{supp}\left(  C_{i}^{\prime}\right)
\right)  \right)  _{i=1}^{r\prime}.
\end{align*}

\begin{definition}
Let $C\subset\mathbb{F}_{q}^{n}$ be a linear code and $P$ a poset on $\left[
n\right]  $. Let $\mathcal{C}^{\prime}=\left(  C^{\prime};C_{0}^{\prime}%
;C_{i}^{\prime}\right)  _{i=1}^{r\prime}$ and $\mathcal{C}^{\prime\prime
}=\left(  C^{\prime\prime};C_{0}^{\prime\prime};C_{i}^{\prime\prime}\right)
_{i=1}^{r\prime\prime}$ be two $P$-decompositions of $C$. We say that
$\mathcal{C}^{\prime}$ is \emph{a }$P$\emph{-refinement} ($1$\emph{-step }%
$P$-\emph{refinement}) of $\mathcal{C}^{\prime\prime}$ if $\sigma_{T}[
\left(  \mathrm{supp}\left(  C_{0}^{\prime}\right)  ;\mathrm{supp}\left(
C_{i}^{\prime}\right)  \right)  _{i=1}^{r\prime}]  $ is a refinement
($1$-step refinement) of the partition $\sigma_{T}[  \left(  \mathrm{supp}\left(
C_{0}^{\prime\prime}\right)  ;\mathrm{supp}\left(  C_{i}^{\prime\prime
}\right)  \right)  _{i=1}^{r^{\prime\prime}}]$. %Similar to what was done with the set partitions, substituting $\leq$ by $\overset{P}{\leq}$, we can describe if the refinement is obtained by a split or an aggregation operation. 

%We use a notation similar to the one used for refinements
%of partitions substituting $\leq$ by $\overset{P}{\leq}$ and write
%\begin{align*}
%&  \mathcal{C}^{\prime}\overset{P}{\leq}\mathcal{C}^{\prime\prime}\text{ for
%refinement;}\\
%&  \mathcal{C}^{\prime}\overset{P}{\leq_{l}}\mathcal{C}^{\prime\prime}\text{
%for one step refinement of }C_{l}^{\prime\prime}\text{;}\\
%&  \mathcal{C}^{\prime}\overset{P}{\leq_{l}^{s}}\mathcal{C}^{\prime\prime
%}\text{ for one step refinement of }C_{l}^{\prime\prime}\text{ corresponding
%to a spliting of }\mathrm{supp}\left(  C_{l}^{\prime\prime}\right)  \text{
%;}\\
%&  \mathcal{C}^{\prime}\overset{P}{\leq_{l}^{a}}\mathcal{C}^{\prime\prime
%}\text{ for one step refinement of }C_{l}^{\prime}\text{ corresponding to a
%agregation of }\mathrm{supp}\left(  C_{l}^{\prime\prime}\right)  \text{ .}%
%\end{align*}
%\bigskip
\end{definition}

\begin{definition}
A $P$-decomposition $\left(  C;C_0;C_{i}\right)_{i=1}^r$ is said to be \emph{maximal} if each $C_{i}$ is $P$-irreducible.
\end{definition}

Let us consider the anti-chain poset $H$\ and let $\mathbb{F}_{q}%
=\mathbb{F}_{2}$. Then we have that $d_{H}$ is just the usual Hamming metric
and $G_{H}$ is the permutation group $S_{n}$. Given a code $C$, let
$(C;C_0;C_i)_{i=1}^r$ be its maximal decomposition. If a code $C^{\prime}$
is $H$-equivalent to $C$ then there is $T\in G_{H}$ such that $T\left(
C\right)  =C^{\prime}$ and it is easy to see that $(C^\prime;T(C_0);T(C_i))_{i=1}^r$ is a maximal decomposition of
$C^{\prime}$, so that, when considering the Hamming metric, a maximal
decomposition is also a maximal $H$-decomposition. This is not the general case, as we can
see in the following example.

\begin{example}
Let us consider $C$ to be the $1$-dimensional binary code of length $n$ generated by $\left(
1,1,\ldots,1\right)  $, let $H$ be the anti-chain order and $P$ the chain
order defined by the relations $1\preceq2\preceq\cdots\preceq n$. It follows
that $C$ is $H$-irreducible but not $P$-irreducible. Indeed, it is easy to
show that the map
\[
T\left(  x_{1},\ldots,x_{n-1},x_{n}\right)  =\left(  x_{1}-x_{n},\ldots,x_{n-1}%
-x_{n},x_{n}\right)
\]
is a $P$-isometry and $C^{\prime}=T\left(  C\right)  $ is the code
$\mathrm{span}\left(  \left\{  e_{n}\right\}  \right)  $ generated by $e_{n}$,
hence
\[
\mathcal{C}^{\prime}=\left(
\mathrm{span}\left(  \left\{  e_{n}\right\}  \right)  ;\mathrm{span}\left(
\left\{  e_{1},\ldots,e_{n-1}\right\}  \right)  ;\mathrm{span}\left(  \left\{
e_{n}\right\}  \right)  \right)
\]
is a maximal $P$-decomposition of $C$.
\end{example}

\begin{definition}
Let $\mathcal{C}^{\prime}:=\mathcal{C}_{P}^{\prime}=\left(
C^{\prime};C_{0}^{\prime};C_{i}^{\prime}\right)  _{i=1}^{r}$ be a
$P$-refinement of an $\left[  n,k\right]  _{q}$-linear code $C$. The
\emph{profile } of
$\mathcal{C}^{\prime}$ is the array 
\[ \mathrm{profile}\left(  \mathcal{C}^{\prime}\right) :=\left[  \left(  n_{0}^{\prime}%
,k\,_{0}^{\prime}\right)  ,\left(  n_{1}^{\prime},k_{1}^{\prime}\right)
,\ldots,\left(  n_{r}^{\prime},k_{r}^{\prime}\right)  \right]  
\]
where
\[
n_{i}^{\prime}=\left\vert \mathrm{supp}\left(  C_{i}^{\prime}\right)
\right\vert \text{ and }k_{i}^{\prime}=\dim\left(  C_{i}^{\prime}\right)
\text{.}%
\]
\end{definition}
It is obvious that $n=n_{0}+n_{1}+\cdots+n_{r}$ and $k=k_{1}+k_{2}%
+\cdots+k_{r}$.
The following theorem states that the profile of a $P$-maximal decomposition
$\mathcal{C}_{P}^{\prime}$ of a code $C$ depends (essentially) exclusively on
$C$, not on $\mathcal{C}_{P}^{\prime}$.

\begin{theorem}
\label{teo1}Let $C$ be an $\left[  n,k\right]  _{q}$-linear code and let $P$
be a poset over $\left[  n\right]  $. Let $\mathcal{C}^{\prime}$ and
$\mathcal{C}^{\prime\prime}$ be two maximal $P$-decompositions of $C$ with%
$$
\mathrm{profile}\left(  \mathcal{C}^{\prime}\right)    =\left[  \left(
n_{0}^{\prime},k\,_{0}^{\prime}\right)  ,\left(  n_{1}^{\prime},k_{1}^{\prime
}\right)  ,\ldots,\left(  n_{r}^{\prime},k_{r}^{\prime}\right)  \right] 
$$
and
$$
\mathrm{profile}\left(  \mathcal{C}^{\prime\prime}\right)   =\left[
\left(  n_{0}^{\prime\prime},k_{0}^{\prime\prime}\right)  ,\left(
n_{1}^{\prime\prime},k_{1}^{\prime\prime}\right)  ,\ldots,\left(  n_{s}%
^{\prime\prime},k_{s}^{\prime\prime}\right)  \right]  \text{.}%
$$
Then, $r=s$ and, up to a permutation, $\mathrm{profile}\left(  \mathcal{C}%
^{\prime}\right)  =\mathrm{profile}\left(  \mathcal{C}^{\prime\prime}\right)
$, that is, there is $\sigma\in S_{r}$ such that $\left(  n_{i}^{\prime}%
,k_{i}^{\prime}\right)  =(  n_{\sigma\left(  i\right)  }^{\prime\prime
},k_{\sigma\left(  i\right)  }^{\prime\prime})  $ and $\left(
n_{0}^{\prime},k\,_{0}^{\prime}\right)  =\left(  n_{0}^{\prime\prime}%
,k_{0}^{\prime\prime}\right)  $.
\end{theorem}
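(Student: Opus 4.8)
The statement is a uniqueness result: two maximal $P$-decompositions of the same code must have the same profile (multiset of pairs $(n_i,k_i)$) and the same distinguished pair $(n_0,k_0)$. My plan is to transport both decompositions to a common reference frame and then show that the two resulting partitions of $[n]$ are \emph{identical} as unordered (pointed) partitions, which forces equality of all the $n_i$; the equality of the $k_i$ will then follow from the fact that each component is a direct summand supported on its own block of coordinates.

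\begin{proof}
Let $T',T''\in G_P$ be the isometries with $T'(C)=C'$ and $T''(C)=C''$, and set $T=T''\circ (T')^{-1}$, so $T(C')=C''$ and $T\in G_P$. By Remark \ref{iso} and \cite{groups}, $T$ determines an order automorphism $\sigma_T\in \mathrm{Aut}(P)$, and $\sigma_T$ carries the partition of $[n]$ attached to $\mathcal{C}'$ to a partition supported on the same coordinate blocks as $\mathcal{C}''$. Applying $T$ to the direct-sum decomposition $C'=\oplus_{i=1}^{r} C_i'$ gives $C''=\oplus_{i=1}^{r} T(C_i')$, and each $T(C_i')$ is again $P$-irreducible (since $T$ is an isometry and $P$-irreducibility is preserved by $G_P$). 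Thus $\bigl(C''; T(C_0'); T(C_i')\bigr)_{i=1}^{r}$ is a \emph{maximal} $P$-decomposition of $C''$ with the same profile as $\mathcal{C}'$, because $\dim T(C_i')=\dim C_i'$ and $|\mathrm{supp}(T(C_i'))|=|\sigma_T(\mathrm{supp}(C_i'))|=|\mathrm{supp}(C_i')|$. This reduces the theorem to the case $C'=C''=C$: it suffices to compare two maximal $P$-decompositions of the \emph{same} code, without loss of generality.

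The core of the argument is then to show that for a fixed code $D$ (playing the role of $C''$), any two maximal $P$-decompositions induce the \emph{same} pointed partition of $[n]$. First, $C_0$ is forced: by Definition \ref{code_decomposition} its support is exactly $[n]\setminus \mathrm{supp}(D)$, which depends only on $D$; hence $n_0$ and $k_0=0$ are decomposition-independent. For the remaining blocks I would argue that the partition of $\mathrm{supp}(D)$ into the supports $\mathrm{supp}(C_i)$ is intrinsic: define on $\mathrm{supp}(D)$ the relation that places two coordinates in the same block when they cannot be separated by any $P$-decomposition, and show that a maximal decomposition realizes exactly the connected classes of this relation. Concretely, a block $\mathrm{supp}(C_i)$ cannot be split further precisely because $C_i$ is $P$-irreducible; and two distinct irreducible components have disjoint supports by the pointed-partition condition. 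The key uniqueness step is to verify that the finest such partition is well-defined, i.e. that the supports occurring in one maximal decomposition refine (hence equal, by symmetry) those of another — this is where I expect the main difficulty to lie.

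The main obstacle is precisely this well-definedness of the finest partition: a priori two different maximal decompositions could carve $\mathrm{supp}(D)$ into incomparable families of blocks, and one must rule out ``crossing'' supports. To handle this I would show that if $\mathcal{C}=(D;C_0;C_i)$ and $\tilde{\mathcal{C}}=(D;\tilde C_0;\tilde C_j)$ are both maximal, then each $C_i$ is contained in the support-space of a single $\tilde C_j$ — equivalently, that $\mathrm{supp}(C_i)\subseteq \mathrm{supp}(\tilde C_j)$ for some $j$. Indeed, if $\mathrm{supp}(C_i)$ met two distinct blocks $\mathrm{supp}(\tilde C_{j_1}),\mathrm{supp}(\tilde C_{j_2})$, then intersecting $C_i$ with the corresponding support-spaces $V_{\tilde C_{j_1}}, V_{\tilde C_{j_2}}$ (which are $P$-compatible coordinate subspaces) would exhibit a proper direct-sum splitting of $C_i$, contradicting its $P$-irreducibility. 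Running this inclusion in both directions gives a bijection between the two families of supports with $\mathrm{supp}(C_i)=\mathrm{supp}(\tilde C_{\sigma(i)})$, whence $r=s$ and $n_i=\tilde n_{\sigma(i)}$; finally $k_i=\dim C_i=\dim(D\cap V_{\mathrm{supp}(C_i)})=\dim \tilde C_{\sigma(i)}=\tilde k_{\sigma(i)}$, because once the supports coincide each component is uniquely recovered as the part of $D$ living on that coordinate block. This yields the claimed permutation $\sigma\in S_r$ matching the two profiles, together with $(n_0',k_0')=(n_0'',k_0'')$.
\end{proof}
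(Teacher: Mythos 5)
Your reduction in the first paragraph is where the proof breaks. You claim that transporting $\mathcal{C}'$ by $T$ yields a maximal $P$-decomposition $\bigl(C''; T(C_0'); T(C_i')\bigr)_{i=1}^{r}$ of $C''$ with the same profile, justified by $|\mathrm{supp}(T(C_i'))|=|\sigma_T(\mathrm{supp}(C_i'))|=|\mathrm{supp}(C_i')|$. This is false: by Remark \ref{iso}, $G_P=S_P\ltimes\Delta_P$, and only the permutation part $S_P$ acts on supports through $\sigma_T$; the triangular part $\Delta_P$ can change supports drastically. The paper's own example makes this explicit: for the chain $1\preceq 2\preceq\cdots\preceq n$ the isometry $T(x_1,\ldots,x_n)=(x_1-x_n,\ldots,x_{n-1}-x_n,x_n)$ maps the code spanned by $(1,1,\ldots,1)$, whose support has size $n$, onto $\mathrm{span}(\{e_n\})$, whose support has size $1$. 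Worse, the images $T(C_i')$ need not have pairwise disjoint supports, so the transported family need not be a decomposition at all: take $n=2$, $q=2$, $P$ the chain $1\preceq 2$, and the $P$-isometry $T(x_1,x_2)=(x_1+x_2,x_2)$; applied to the maximal $P$-decomposition $\bigl(\mathbb{F}_2^2;0;\mathrm{span}(\{e_1\}),\mathrm{span}(\{e_2\})\bigr)$ it produces components $\mathrm{span}(\{e_1\})$ and $\mathrm{span}(\{(1,1)\})$, whose supports $\{1\}$ and $\{1,2\}$ intersect. Hence the reduction to ``two maximal decompositions of the same code $D$'' is invalid, and this reduction is exactly where the difficulty of the theorem lives: if $P$-isometries respected supports, the statement would be nearly immediate.

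Your paragraphs two and three (the coordinate-projection argument for two support-disjoint decompositions of a single code $D$) are essentially sound --- the splitting $C_i=\oplus_j\bigl(C_i\cap V_{\mathrm{supp}(\tilde C_j)}\bigr)$ can be justified because the coordinate projections onto the blocks of either partition map $D$ into itself --- but they only treat a case the theorem is not about. The paper never transports the decomposition: it keeps the subspaces $T(C_i')$ (which form a direct sum decomposition of $C''$ as a vector space, but not a decomposition in the sense of Definition \ref{code_decomposition}) and shows, using the $P$-irreducibility of the components together with a minimality argument, that each $T(C_i')$ must be contained in a \emph{single} component $C_{j_i}''$; running the same argument for $T^{-1}$ then yields the bijection between components and the equality of profiles, the support sizes being controlled because a $P$-irreducible code admits no single-factor $P$-decomposition with a different support cardinality. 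To repair your argument you would need precisely this containment lemma for the non-support-respecting map $T$, at which point you would have reproduced the paper's proof rather than bypassed it.
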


\begin{proof}
Let $\mathcal{C}^{\prime}=\left(  C^{\prime};C_{0}^{\prime
};C_{i}^{\prime}\right)  _{i=1}^{r}$ and $\mathcal{C}^{\prime\prime}=\left(
C^{\prime\prime};C_{0}^{\prime\prime};C_{i}^{\prime\prime}\right)  _{i=1}^{s}$
be two maximal $P$-decompositions of $C$. Suppose without loss of generality $r<s$. If $T\in GL_{P}\left(  n\right)  $
is an isometry satisfying $T\left(  C^{\prime}\right)  =C^{\prime\prime}$, there is a
component $C_{i}^{\prime}$ of $C^{\prime}$ such that $T\left(  C_{i}^{\prime
}\right)  $ is not contained in any component $C_{j}^{\prime\prime}$ of
$C^{\prime\prime}$, that is, there are $C_{i_{0}}^{\prime},C_{j_{0}}%
^{\prime\prime},C_{j_{1}}^{\prime\prime},\ldots,C_{j_{t}}^{\prime\prime}$ such
that
\[
T\left(  C_{i_{0}}^{\prime}\right)  \subset C_{j_{0}}^{\prime\prime}\oplus
C_{j_{1}}^{\prime\prime}\oplus\cdots\oplus C_{j_{t}}^{\prime\prime}%
\]
but $T\left(  C_{i_{0}}^{\prime}\right) \cap C_{j_{l}}^{\prime\prime} \neq \emptyset$
for any $l$. Here we are assuming $t$ to be strictly positive and minimal with this property. But
the minimality of $t$ implies that%
\[
T\left(  C_{i_{0}}^{\prime}\right)  =\bigoplus_{m=0}^t T\left(  C_{i_{0}}^{\prime
}\right)  \cap C_{j_{m}}^{\prime\prime},
\]
contradicting the hypothesis that  $r<s$. It follows that $r=s$. Moreover,
we have that for every $i\in\{1,\ldots,r\}$ there is $j_{i}$ such that $T\left(
C_{i}^{\prime}\right)  \subseteq C_{j_{i}}^{\prime\prime}$ hence
$n_{i}^{\prime}\leq$ $n_{j_{i}}^{\prime\prime}$ and $k_{i}^{\prime}\leq
k_{j_{i}}^{\prime\prime}$. Applying the same reasoning to $T^{-1}\in
GL_{P}\left(  n\right)  $ we get that $n_{i}^{\prime\prime}\leq$ $n_{j_{i}%
}^{\prime}$and $k_{i}^{\prime\prime}\leq k_{j_{i}}^{\prime}$, hence
$n_{i}^{\prime}=$ $n_{j_{i}}^{\prime\prime}$ and $k_{i}^{\prime}=k_{j_{i}%
}^{\prime\prime}$ so that $\mathrm{profile}\left(  \mathcal{C}^{\prime
}\right)  =\mathrm{profile}\left(  \mathcal{C}^{\prime\prime}\right)  $.
\end{proof}
%The following Corollary follows from the proof of Theorem \ref{teo1}.
The next Corollary follows straight from Theorem \ref{teo1}.
\begin{corollary}
Let $\mathcal{C}^{\prime}=\left(  C^{\prime};C_{0}^{\prime};C_{i}^{\prime
}\right)  _{i=1}^{r}$ and $\mathcal{C}^{\prime\prime}=\left(  C^{\prime\prime
};C_{0}^{\prime\prime};C_{i}^{\prime\prime}\right)  _{i=1}^{r}$ be two maximal
$P$-decompositions of $C$ and let $T\in GL_{P}\left(  n\right)  $ be a linear
isometry such that $T\left(  C^{\prime}\right)  =C^{\prime\prime}$. Then,
there is a permutation $\sigma\in S_{r}$ such that $T\left(  C_i^{\prime}\right)  =C_{\sigma\left(  i\right)  }^{\prime\prime}$.
\end{corollary}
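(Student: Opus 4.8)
The plan is to show that $T$ carries each irreducible component of $\mathcal{C}^{\prime}$ onto a single irreducible component of $\mathcal{C}^{\prime\prime}$, and that the resulting index correspondence is a bijection. The starting point is the containment already extracted in the proof of Theorem \ref{teo1}: since the decompositions are maximal, each $C_{i}^{\prime}$ is $P$-irreducible, and this forbids $T(C_{i}^{\prime})$ from meeting two or more components of $C^{\prime\prime}$ nontrivially. Concretely, if $T(C_{i}^{\prime})\subseteq C_{j_{0}}^{\prime\prime}\oplus\cdots\oplus C_{j_{t}}^{\prime\prime}$ with $t\geq 1$ minimal, the same minimality argument used in Theorem \ref{teo1} yields $T(C_{i}^{\prime})=\bigoplus_{m=0}^{t}(T(C_{i}^{\prime})\cap C_{j_{m}}^{\prime\prime})$. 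Because the $C_{j_{m}}^{\prime\prime}$ have pairwise disjoint supports, this is a non-trivial decomposition of $T(C_{i}^{\prime})$; as $T(C_{i}^{\prime})\sim_{P}C_{i}^{\prime}$ and $P$-irreducibility is a $\sim_{P}$-invariant, it would be a non-trivial $P$-decomposition of the irreducible code $C_{i}^{\prime}$, a contradiction. Hence for every $i$ there is a unique $\sigma(i)$ with $T(C_{i}^{\prime})\subseteq C_{\sigma(i)}^{\prime\prime}$.

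Next I would apply the identical reasoning to $T^{-1}\in GL_{P}(n)$, which satisfies $T^{-1}(C^{\prime\prime})=C^{\prime}$, obtaining an index map $\tau$ with $T^{-1}(C_{j}^{\prime\prime})\subseteq C_{\tau(j)}^{\prime}$ for every $j$. These maps are mutually inverse: applying $T^{-1}$ to $T(C_{i}^{\prime})\subseteq C_{\sigma(i)}^{\prime\prime}$ gives $C_{i}^{\prime}\subseteq T^{-1}(C_{\sigma(i)}^{\prime\prime})\subseteq C_{\tau(\sigma(i))}^{\prime}$, and since distinct components of a decomposition intersect only in $\{0\}$ while $\dim C_{i}^{\prime}>0$, the inclusion $C_{i}^{\prime}\subseteq C_{\tau(\sigma(i))}^{\prime}$ forces $\tau(\sigma(i))=i$. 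Symmetrically $\sigma(\tau(j))=j$, so $\sigma\in S_{r}$.

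Finally, to upgrade the inclusions to equalities I would count dimensions. By Theorem \ref{teo1} the matched indices carry equal profiles, so $\dim C_{i}^{\prime}=k_{i}^{\prime}=k_{\sigma(i)}^{\prime\prime}=\dim C_{\sigma(i)}^{\prime\prime}$; as $T$ is invertible, $\dim T(C_{i}^{\prime})=\dim C_{i}^{\prime}=\dim C_{\sigma(i)}^{\prime\prime}$, and together with $T(C_{i}^{\prime})\subseteq C_{\sigma(i)}^{\prime\prime}$ this forces $T(C_{i}^{\prime})=C_{\sigma(i)}^{\prime\prime}$, as claimed. I expect the only genuinely delicate point to be the first one, the containment of $T(C_{i}^{\prime})$ in a single component, which rests on the direct-sum identity coming from minimality and on the invariance of $P$-irreducibility under $P$-isometries; once that is secured, the bijectivity of $\sigma$ and the passage from inclusion to equality are routine.
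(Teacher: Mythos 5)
Your architecture (containment of each $T(C_i^{\prime})$ in a single component, bijectivity of the index map via $T^{-1}$, then upgrading inclusions to equalities) is exactly what the paper intends when it says the corollary ``follows straight from Theorem~\ref{teo1}'', and your second and third steps are sound given the first; in fact the third is redundant, since once you know $\tau(\sigma(i))=i$ you have $T^{-1}(C_{\sigma(i)}^{\prime\prime})\subseteq C_i^{\prime}$, hence $C_{\sigma(i)}^{\prime\prime}\subseteq T(C_i^{\prime})$, and equality follows with no dimension count (this also spares you the loose appeal to Theorem~\ref{teo1} for $k_i^{\prime}=k_{\sigma(i)}^{\prime\prime}$, which as stated only gives a profile-matching permutation, not necessarily yours). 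The genuine gap is the step you yourself flag as delicate, and it cannot be closed as written: minimality of $t$ does \emph{not} imply $T(C_i^{\prime})=\bigoplus_{m=0}^{t}\left(T(C_i^{\prime})\cap C_{j_m}^{\prime\prime}\right)$. A subspace contained in a direct sum of coordinate-supported components can sit diagonally across them, meeting every single component in $\{0\}$, so the right-hand side can be $\{0\}$ while the left-hand side is not. Concretely, let $P$ be the chain $1\prec 2$ on $[2]$, let $C=C^{\prime}=C^{\prime\prime}=\mathbb{F}_q^2$ with the maximal $P$-decomposition $\mathcal{C}^{\prime}=\mathcal{C}^{\prime\prime}=\left(\mathbb{F}_q^2;\{0\};\langle e_1\rangle,\langle e_2\rangle\right)$ (each component is one-dimensional with singleton support, so it can be neither split nor support-reduced), and let $T(x_1,x_2)=(x_1+x_2,x_2)$, which lies in $GL_P(2)$ exactly as in the paper's own chain example and satisfies $T(C^{\prime})=C^{\prime\prime}$. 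Then
\[
T(\langle e_2\rangle)=\langle e_1+e_2\rangle,\qquad
T(\langle e_2\rangle)\cap\langle e_1\rangle=T(\langle e_2\rangle)\cap\langle e_2\rangle=\{0\},
\]
so $T(\langle e_2\rangle)$ is contained in no component and is certainly not the direct sum of its intersections with them.

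To be fair, this is not a defect you introduced: the identical step appears verbatim in the paper's proof of Theorem~\ref{teo1}, and you borrowed it in good faith. But the example shows something stronger than a hole in the argument: for this $T$ the \emph{conclusion} of the corollary fails, since $T(\langle e_2\rangle)$ equals no component of $\mathcal{C}^{\prime\prime}$. So the statement, read as quantified over every isometry $T$ with $T(C^{\prime})=C^{\prime\prime}$, is false, and no purely linear-algebraic repair of the minimality argument can save it. What can hold is an existential version --- among all isometries carrying $C^{\prime}$ to $C^{\prime\prime}$, some one permutes the components --- and proving that requires genuinely using the structure $GL_P(n)=S_P\ltimes\Delta_P$ of Remark~\ref{iso} (correcting a given $T$ by a triangular isometry) rather than the containment claim. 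A secondary caution: your assertion that $P$-irreducibility is a $\sim_P$-invariant is also shaky under the paper's definitions; for the chain, $\langle e_2\rangle$ is irreducible while the equivalent code $\langle e_1+e_2\rangle$ admits the non-trivial aggregate decomposition $\left(\langle e_2\rangle;\langle e_1\rangle;\langle e_2\rangle\right)$, so only the ``no splitting'' half of irreducibility passes across the orbit.
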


\subsection{Primary Decompositions and Complexity for Syndrome Decoding}

Syndrome decoding can be done for poset metrics in the same way it is performed in the Hamming case, just exchanging syndrome leaders to be vectors with minimal poset weight, instead of the Hamming weight. A look-up table for syndrome decoding of $\left(  C^{\prime};C_{0}^{\prime
};C_{i}^{\prime}\right)_{i=1}^{r}$ will have $\sum_{i=1}%
^{r}q^{n_{i}-k_{i}}$ elements where $n_i$ and $k_i$ are obtained by the profile of the decomposition (we are ignoring the component $V_{0}$ since we
know that any codeword $c\in C$ should have $c_{j}=0$ for $j\in\left[
n\right]  ^{C}$). Using these observations, we can define the complexity of a given decomposition as follows:

\begin{definition}
We call 
$$\mathcal{O}\left(  \mathcal{C}^{\prime}\right)
=\sum_{i=1}^{r}q^{n_{i}-k_{i}}$$ the \emph{complexity of the decomposition}.
\end{definition}
%\begin{definition}
%Given a $P$-decomposition $\mathcal{C}^{\prime}=\left(  C^{\prime}%
%;C_{0}^{\prime},C_{1}^{\prime},\dots,C_{r}^{^{\prime}}\right)  $, we call
%\textrm{shape}$_{P}\left(  \mathcal{C}^{\prime}\right)  =\left(  \left(
%k_{0},n_{0}\right)  ,\left(  k_{1},n_{1}\right)  ,\ldots,\left(  k_{r}%
%,n_{r}\right)  \right)  $ the shape of the decomposition $\mathcal{C}^{\prime
%}$. We remark that the complexity of a decomposition depends only on its shape.
%\end{definition}

\begin{definition}
We say that a $P$-decomposition $\mathcal{C}^{\prime}=\left(  C^{\prime
};C_{0}^{\prime};C_{i}^{\prime}\right)_{i=1}^r$ of $\emph{C}$ is \emph{primary} if $\mathcal{O}\left(  \mathcal{C}^{\prime}\right)$ is minimal among all $P$-decompositions of $C$. This \emph{minimal
complexity of }$C$ \emph{relative to} $P$ is denoted by $\mathcal{O}%
_{P}\left(  C\right)$.
\end{definition}

We will omit the proof, but it is not so difficult to see (due to the previous definitions) that the permutation part $S_{P}$ of $GL_{P}\left(  n\right)  $ is irrelevant to the decomposition of a code.

\begin{lemma}
\label{lema1}Let $\mathcal{C}^{\prime}\mathcal{=}\left(  C^{\prime}%
;C_{0}^{\prime};C_{i}^{\prime}\right)_{i=1}^r$ be a primary
$P$-decomposition of $C$. Let $\sigma\in Aut\left(  P\right)  $ and $T_{\sigma}\in
S_{P}$ be the corresponding isometry. Then
\[
T_{\sigma}\left(  \mathcal{C}^{\prime}\right)  =\left(  T_{\sigma}\left(
C^{\prime}\right)  ;T_{\sigma}\left(  C_{0}^{\prime}\right)  ;T_{\sigma}\left(  C_{i}^{\prime}\right)  \right)_{i=1}^r
\]
is also a primary $P$-decomposition of $C$.
\end{lemma}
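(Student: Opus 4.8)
The plan is to verify two things: first, that $T_\sigma(\mathcal{C}')$ is again a legitimate $P$-decomposition of $C$ in the sense of Definition \ref{P_decomposition}; and second, that it carries exactly the same complexity as $\mathcal{C}'$, so that the minimality of $\mathcal{O}(\mathcal{C}')$ transfers to $T_\sigma(\mathcal{C}')$.

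For the first point, I would start from the observation that $T_\sigma \in S_P \subseteq G_P$, so, since orbits are equivalence classes and $C' \sim_P C$, we get $T_\sigma(C') \sim_P C' \sim_P C$. It then remains to check that $(T_\sigma(C'); T_\sigma(C_0'); T_\sigma(C_i'))_{i=1}^r$ satisfies the three conditions of Definition \ref{code_decomposition} for the code $T_\sigma(C')$. The direct-sum condition is immediate, because $T_\sigma$ is a linear isomorphism, so $T_\sigma(C') = \oplus_{i} T_\sigma(C_i')$ with $\dim T_\sigma(C_i') = \dim C_i' > 0$. The two remaining conditions are about supports, and here I would use that $T_\sigma(x) = (x_{\sigma(1)},\ldots,x_{\sigma(n)})$ sends $e_i$ to $e_{\sigma^{-1}(i)}$, whence $\mathrm{supp}(T_\sigma(V)) = \sigma^{-1}(\mathrm{supp}(V))$ for every subspace $V$. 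Since $\sigma^{-1}$ is a bijection of $[n]$ that commutes with complementation, it identifies $[n]^{T_\sigma(C')}$ with $\sigma^{-1}([n]^{C'})$ and carries the pointed partition $(\mathrm{supp}(C_0'); \mathrm{supp}(C_i'))_{i=1}^r$ to the pointed partition $(\sigma^{-1}(\mathrm{supp}(C_0')); \sigma^{-1}(\mathrm{supp}(C_i')))_{i=1}^r$; this is precisely what is required for the $C_0$-condition and the partition condition.

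For the second point, the key is that $T_\sigma$ preserves the profile componentwise. Because $\sigma^{-1}$ is a bijection, $|\mathrm{supp}(T_\sigma(C_i'))| = |\mathrm{supp}(C_i')| = n_i$, and because $T_\sigma$ is an invertible linear map, $\dim T_\sigma(C_i') = \dim C_i' = k_i$. Hence $\mathcal{O}(T_\sigma(\mathcal{C}')) = \sum_{i=1}^r q^{n_i - k_i} = \mathcal{O}(\mathcal{C}')$ (no reindexing is even needed, since component $C_i'$ maps to component $T_\sigma(C_i')$). As $\mathcal{C}'$ is primary, $\mathcal{O}(\mathcal{C}') = \mathcal{O}_P(C)$ is the minimum complexity over all $P$-decompositions of $C$; since $T_\sigma(\mathcal{C}')$ is such a $P$-decomposition and attains this value, it is primary as well.

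The computations are all routine; the only point requiring a little care is the bookkeeping of how the coordinate permutation $T_\sigma$ acts on supports — in particular tracking that it acts through $\sigma^{-1}$ rather than $\sigma$ — and checking that this action respects the distinguished role of $C_0$ (equivalently of $[n]^{C'}$). Notably, no property of $\sigma$ as an order automorphism of $P$ is used beyond $T_\sigma \in G_P$, which is consistent with the remark preceding the lemma that the permutation part $S_P$ of $G_P$ is irrelevant to the decomposition.
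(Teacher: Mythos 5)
Your proof is correct, and in fact the paper omits the proof of this lemma entirely, remarking only that it is "not so difficult to see (due to the previous definitions)" that the permutation part $S_P$ of $GL_P(n)$ is irrelevant to the decomposition of a code; your argument supplies exactly the routine verification the authors had in mind. The two points you check — that $T_\sigma(\mathcal{C}')$ is again a $P$-decomposition of $C$ (using $T_\sigma \in S_P \subseteq G_P$, transitivity of $\sim_P$, and the fact that $\mathrm{supp}(T_\sigma(V)) = \sigma^{-1}(\mathrm{supp}(V))$ preserves the pointed-partition structure) and that the profile, hence the complexity $\mathcal{O}(\mathcal{C}')$, is preserved componentwise — are precisely what the definitions require, including the correct bookkeeping that $T_\sigma$ acts on supports through $\sigma^{-1}$.
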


We recall that the set $\mathcal{P}_{n}$ of poset over $\left[  n\right]  $ is
itself a partially ordered set and primary decompositions ``behaves well''
according to this order, in the following way: 

\begin{theorem}
\label{conj2}Let $P,Q\in\mathcal{P}_n$ with
$P\leq Q$. Given a code $C$, there is a primary $P$-decomposition of $C$ which is also a $Q$-decomposition of $C$.
\end{theorem}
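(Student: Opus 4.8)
The plan is to exploit the semidirect decomposition $G_{P}=S_{P}\ltimes\Delta_{P}$ from Remark~\ref{iso} together with Lemma~\ref{lema1}: any primary $P$-decomposition is realized by an isometry whose permutation part can be stripped off without affecting primality, leaving a purely triangular isometry that, since $P\leq Q$, automatically lies in $G_{Q}$. The point is that passing from $P$ to the coarser order $Q$ enlarges the triangular group ($\Delta_{P}\subseteq\Delta_{Q}$), so the only thing to worry about is the permutation factor.

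Concretely, I would begin with an arbitrary primary $P$-decomposition $\mathcal{C}^{\prime}=\left(C^{\prime};C_{0}^{\prime};C_{i}^{\prime}\right)_{i=1}^{r}$ of $C$, so that $C^{\prime}=T(C)$ for some $T\in G_{P}$. By Remark~\ref{iso} the coset $T\Delta_{P}$ contains a unique representative $T_{\sigma}\in S_{P}$ with $\sigma\in Aut(P)$, whence $\delta:=T_{\sigma}^{-1}T\in\Delta_{P}$ and $T=T_{\sigma}\,\delta$. Applying the isometry $T_{\sigma^{-1}}=T_{\sigma}^{-1}$ to $\mathcal{C}^{\prime}$ and invoking Lemma~\ref{lema1} (valid since $\sigma^{-1}\in Aut(P)$), the collection $T_{\sigma^{-1}}(\mathcal{C}^{\prime})$ is again a primary $P$-decomposition of $C$, and the code it decomposes is
\[
T_{\sigma^{-1}}(C^{\prime})=T_{\sigma}^{-1}T_{\sigma}\,\delta(C)=\delta(C).
\]

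It then remains to observe that this new primary $P$-decomposition is also a $Q$-decomposition. Indeed, $P\leq Q$ gives $\delta\in\Delta_{P}\subseteq\Delta_{Q}\subseteq G_{Q}$ by Remark~\ref{iso}, so $\delta(C)\sim_{Q}C$; thus $T_{\sigma^{-1}}(\mathcal{C}^{\prime})$, being a decomposition of a code $Q$-equivalent to $C$, is a $Q$-decomposition of $C$ in the sense of Definition~\ref{P_decomposition}. The only genuine difficulty is that $S_{P}$ need not be contained in $G_{Q}$ — Remark~\ref{iso} explicitly warns that inclusions among the permutation parts may fail — so one cannot simply declare $T\in G_{Q}$. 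The role of Lemma~\ref{lema1} is precisely to absorb the offending permutation factor $T_{\sigma}$ while preserving primality, reducing the transfer from $P$ to $Q$ to the triangular factor $\delta$, which is harmless because $\Delta_{P}\subseteq\Delta_{Q}$.
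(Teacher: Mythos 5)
Your proof is correct and takes essentially the same route as the paper's: factor the realizing isometry using $G_{P}=S_{P}\ltimes\Delta_{P}$ from Remark~\ref{iso}, strip the permutation part with Lemma~\ref{lema1}, and use $\Delta_{P}\subseteq\Delta_{Q}$ to land in $G_{Q}$. If anything, your write-up is more careful than the paper's at one point: the paper computes $T_{\sigma^{-1}}(C')=T_{\sigma^{-1}}(T_{\sigma}(C))=C$, silently dropping the triangular factor, whereas you correctly track that the stripped decomposition sits on $\delta(C)$ and then justify $\delta(C)\sim_{Q}C$ via $\delta\in\Delta_{P}\subseteq\Delta_{Q}\subseteq G_{Q}$, which is precisely where the hypothesis $P\leq Q$ enters.
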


\begin{proof}
Assume $P,Q\in\mathcal{P}_n  $ and $P<Q$. Let
$\mathcal{C}^{\prime}\mathcal{=}\left(  C^{\prime};C_{0}^{\prime}%
;C_{i}^{\prime}\right)_{i=1}^r  $ be a primary $P$-decomposition
of $C$. Let $\phi\in GL_{P}\left(  n\right)  $ be such that $\phi\left(
C\right)  =C^{\prime}$. We decompose $\phi$ as a product (composition)
$T_{\sigma}\circ$ $A$ of a triangular matrix $A$ and a permutation matrix
$T_{\sigma}$. 
%Because we have $A\in T_{P}$, it follows that $A\in T_{Q}$ (see Remark
%\ref{iso}), and thus we may assume that $\phi=T_{\sigma}\in S_{P}$, where $\sigma%$ is
%an automorphism of the order $P$. 
From Lemma \ref{lema1}, we have that
\begin{align*}
\mathcal{C}^{\prime\prime} &  :=T_{\sigma^{-1}}\left(  \mathcal{C}^{\prime
}\right)  \\
&  =\left(  T_{\sigma^{-1}}\left(  C^{\prime}\right)  ;T_{\sigma^{-1}}\left(
C_{0}^{\prime}\right)  ;T_{\sigma^{-1}}\left(  C_{i}^{\prime}\right)  \right)_{i=1}^r  \\
&  =\left(  C^{\prime\prime};C_{0}^{\prime\prime};C_{i}^{\prime\prime
}\right)_{i=1}^r
\end{align*}
is a primary $P$-decomposition of $C$. But
$$
C^{\prime\prime}=T_{\sigma^{-1}}\left(  C^{\prime}\right)   =T_{\sigma^{-1}}\left(
T_{\sigma}  ( C)  \right)  =  C,
$$
therefore $\mathcal{C}^{\prime\prime}$ is a $Q$-decomposition of $C$.
\end{proof}

\begin{corollary}
Let $P,Q\in\mathcal{P}_n  $ with
$P\leq Q$. Then, 
$\mathcal{O}_Q(C)\leq \mathcal{O}_P(C)$
for every $[n,k]_q$ linear code $C$.
\end{corollary}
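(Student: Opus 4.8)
The plan is to derive the inequality directly from Theorem \ref{conj2}, the key observation being that the complexity functional $\mathcal{O}$ is insensitive to which poset a decomposition is attached to: it depends only on the profile $\left[(n_0,k_0),\ldots,(n_r,k_r)\right]$ through the formula $\sum_{i=1}^r q^{n_i-k_i}$, and not on $P$ or $Q$ themselves. Consequently one and the same decomposition carries the same complexity value whether it is regarded as a $P$-decomposition or as a $Q$-decomposition, and this is precisely what lets a single object mediate between the two minimal complexities.

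First I would invoke Theorem \ref{conj2}: since $P\leq Q$, there exists a decomposition $\mathcal{C}^{\prime\prime}$ of $C$ that is simultaneously a primary $P$-decomposition and a $Q$-decomposition of $C$. Because $\mathcal{C}^{\prime\prime}$ is primary relative to $P$, its complexity realizes the minimal $P$-complexity, that is, $\mathcal{O}\left(\mathcal{C}^{\prime\prime}\right)=\mathcal{O}_P(C)$. Next I would use that $\mathcal{C}^{\prime\prime}$ is, in particular, one admissible $Q$-decomposition of $C$, whereas $\mathcal{O}_Q(C)$ is by definition the minimum of $\mathcal{O}$ taken over \emph{all} $Q$-decompositions of $C$. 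Hence $\mathcal{O}_Q(C)\leq\mathcal{O}\left(\mathcal{C}^{\prime\prime}\right)$. Chaining the two relations yields $\mathcal{O}_Q(C)\leq\mathcal{O}\left(\mathcal{C}^{\prime\prime}\right)=\mathcal{O}_P(C)$, which is the assertion, for every $\left[n,k\right]_q$ linear code $C$.

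I do not expect a substantial obstacle once Theorem \ref{conj2} is available; the entire content of the corollary is packaged in that theorem. The only point deserving care is conceptual rather than technical, namely to recognize why the inequality runs in the direction $\mathcal{O}_Q(C)\leq\mathcal{O}_P(C)$ and not the reverse. Intuitively, the coarser poset $Q$ satisfies $\Delta_P\subseteq\Delta_Q$, so it admits at least as many triangular isometries and therefore at least as many ways to split the code; a $P$-primary decomposition survives as a $Q$-decomposition by Theorem \ref{conj2}, and $Q$ may then permit still further refinements that can only diminish the complexity. Thus the finer poset furnishes an \emph{upper} bound for the complexity attainable under the coarser one.
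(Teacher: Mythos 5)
Your proof is correct and is precisely the argument the paper intends: the corollary is stated without proof as an immediate consequence of Theorem \ref{conj2}, and your chain $\mathcal{O}_Q(C)\leq\mathcal{O}\left(\mathcal{C}^{\prime\prime}\right)=\mathcal{O}_P(C)$ — resting on the observation that $\mathcal{O}$ depends only on the profile of the decomposition, not on the ambient poset — is exactly how that consequence is meant to be extracted. Nothing is missing.
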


The situation above is quite general: in what concern primary decompositions,
there are no futile posets, in the sense there is always a code that can be refined when passing from $P$ to $Q$.  This is what is stated in the next proposition, its proof is omitted since it is a long and technical construction, considering many different situations.

\begin{proposition}
\label{conj3}Let $P,Q\in\mathcal{P}_n  $
with $P<Q$. Then there is a code $C$ such that the primary $Q$-decomposition
of $C$ is strictly finer than the primary $P$-decomposition.
\end{proposition}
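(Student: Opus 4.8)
The statement is an existence claim, so the plan is to exhibit a single explicit witness rather than to analyse all codes. Since $P<Q$, the identity is an order homomorphism $P\to Q$ that is not an isomorphism, so there is a pair $a\neq b$ with $a\preceq_Q b$ but $a\not\preceq_P b$; as $a\neq b$ this means $a\prec_Q b$. First I would record the companion fact that $b\not\preceq_P a$ as well: if $b\preceq_P a$ then $b\preceq_Q a$ (because $P\leq Q$), and together with $a\preceq_Q b$ antisymmetry of $\preceq_Q$ would force $a=b$, a contradiction. The candidate witness is the one-dimensional code $C=\mathrm{span}(e_a+e_b)$, whose support is $\{a,b\}$.

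Next I would compute the primary $Q$-decomposition. Using Remark~\ref{iso}, the matrix $A=I-E_{ab}$ (identity with a single extra $-1$ in position $(a,b)$) lies in $\Delta_Q\subseteq G_Q$, since its only off-diagonal entry sits at a position $(a,b)$ with $a\preceq_Q b$ and it is unipotent, hence invertible. A direct evaluation gives $A(e_a+e_b)=e_b$, so $A(C)=\mathrm{span}(e_b)$ has support $\{b\}$. This yields a $Q$-decomposition with a single component of profile $(n_1,k_1)=(1,1)$ and pointer of support $[n]\setminus\{b\}$, whose complexity is $q^{1-1}=1$. Since every $Q$-decomposition of a nonzero code has at least one component and each contributes $q^{n_i-k_i}\geq 1$, this decomposition is primary, i.e. $\mathcal{O}_Q(C)=1$.

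The crux is to show that under $P$ the code cannot be reduced at all, i.e. $C$ is $P$-irreducible with $\mathcal{O}_P(C)=q$. Because $\dim C=1$, no $P$-decomposition can have more than one positive-dimensional component, so the only possible non-trivial decomposition would shrink the support below $2$. I would rule this out by writing an arbitrary isometry as $\phi=T_\sigma\circ A$ with $A\in\Delta_P$ and $\sigma\in\mathrm{Aut}(P)$ (Remark~\ref{iso}) and evaluating $A(e_a+e_b)$ coordinatewise: in coordinate $a$ the entry is $A_{aa}+A_{ab}=A_{aa}\neq0$ because $a\not\preceq_P b$ forces $A_{ab}=0$, and symmetrically coordinate $b$ gives $A_{bb}\neq0$ because $b\not\preceq_P a$ forces $A_{ba}=0$. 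Hence $\{a,b\}\subseteq\mathrm{supp}(A(C))$, and applying the permutation $T_\sigma$ only relabels, so $|\mathrm{supp}(\phi(C))|\geq 2$ for every $\phi\in G_P$. Thus no $P$-isometry lowers the support size, $C$ is $P$-irreducible, and its primary $P$-decomposition has the single component $(n_1,k_1)=(2,1)$, giving $\mathcal{O}_P(C)=q$.

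Finally I would conclude strictness: $\mathcal{O}_Q(C)=1<q=\mathcal{O}_P(C)$, and at the level of support-partitions the primary $Q$-decomposition $([n]\setminus\{b\};\{b\})$ is obtained from the primary $P$-decomposition $([n]\setminus\{a,b\};\{a,b\})$ by the $a$-aggregate that moves $a$ into the pointer, so it is a strict refinement. The main obstacle is the $P$-irreducibility step; everything hinges on the two non-relations $a\not\preceq_P b$ and $b\not\preceq_P a$, which is exactly what guarantees that the two relevant off-diagonal entries are forced to vanish in $\Delta_P$ while at least one of them survives in $\Delta_Q$.
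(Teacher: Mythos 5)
Your proof is correct, but it cannot be matched against the paper's own argument for a simple reason: the paper omits the proof of Proposition~\ref{conj3} entirely, saying only that it is ``a long and technical construction, considering many different situations.'' Your single uniform witness $C=\mathrm{span}(e_a+e_b)$ is therefore a genuinely different---and far shorter---route, and it checks out against the paper's definitions. (i) The existence of $a\neq b$ with $a\prec_Q b$ and $a,b$ incomparable in $P$ follows from $P<Q$ together with antisymmetry of $\preceq_Q$, exactly as you argue. (ii) The matrix $A=I-E_{ab}$ has its unique off-diagonal entry in a position permitted by $\Delta_Q$, so by Remark~\ref{iso} it is a $Q$-isometry; it carries $e_a+e_b$ to $e_b$, giving a $Q$-decomposition of complexity $q^{1-1}=1$, which is trivially minimal. (iii) Writing any $\phi\in G_P$ as $T_\sigma\circ A$ with $A\in\Delta_P$ and $T_\sigma\in S_P$ (Remark~\ref{iso} again), the two non-relations force $A_{ab}=A_{ba}=0$, so the $a$ and $b$ coordinates of $A(e_a+e_b)$ are the diagonal entries $A_{aa}$ and $A_{bb}$; these are nonzero because an invertible matrix with the $\Delta_P$ zero-pattern is triangular with respect to any linear extension of $P$---a small point you assert without justification and should state. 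Since $\dim C=1$, every $P$-decomposition has exactly one component, so primality reduces to minimizing support size, whence $\mathcal{O}_P(C)=q$ and the trivial decomposition is primary. Finally, the comparison of the two primary decompositions is legitimate within the paper's framework: the isometry $A$ relating them lies in $\Delta_Q$, so its induced order automorphism is the identity and the support partitions can be compared directly, $\bigl([n]\setminus\{b\};\{b\}\bigr)$ being a single $1$-aggregate of $\bigl([n]\setminus\{a,b\};\{a,b\}\bigr)$. What your approach buys is uniformity and brevity: one one-dimensional code witnesses the claim for every pair $P<Q$, bypassing whatever case analysis the authors had in mind; what it does not provide (and the statement does not require) is any structural information beyond this minimal witness.
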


\subsection{Hierarchical Bounds for Complexity of Syndrome Decoding}

Hierarchical poset metrics are well understood. In particular, if $P$ is a hierarchical poset, the profile of a primary $P$-decomposition of a code $C$ is uniquely (and easily) determined by the weight hierarchy of the code (see \cite{felix} for details). Moreover, it is possible to prove that this property is exclusive of hierarchical posets. For this reason, when considering a general poset $P$, we aim to establish bounds for $\mathcal{O}_P(C)$ considering the easy-to-compute minimal complexity of $C$ relatively to hierarchical posets.

Let $P=\left(  \left[  n\right]  ,\preceq\right)$ be a poset with level
hierarchy $\left[  n\right]  =H_{1}\cup H_{2}\cup\ldots\cup H_{r}$ with
$n_{i}=\left\vert H_{i}\right\vert $ and $n=\sum_{i=1}^r n_{i}$. We say that $P$ is
\emph{hierarchical at level }$i$ if levels $H_{i}$ and $H_{i-1}$ relate
hierarchically, that is, if $a\in H_{i}$ and $b\in H_{i-1}$ then $b\preceq a$. Since $H_0$ is empty, by vacuity every poset is hierarchical at level $1$.
Let $\mathcal{H}\left(  P\right)  =\left\{  i\in\left[  r\right]  |P\text{ is
hierarchical at }i\right\}  $. Of course, $P$ is hierarchical \emph{iff}
$\mathcal{H}\left(  P\right)  =\left[  r\right]  $. We can label the elements
of $\mathcal{H}\left(  P\right)  $ as $1=j_{1}<j_{2}<\cdots<j_{h}$ where
$h=\left\vert \mathcal{H}\left(  P\right)  \right\vert $. 
%We denote
%$N_{i}=n_{1}+n_{2}+\cdots+n_{j_{i}}$, for $j_{i}\in\mathcal{H}\left(
%P\right)$.

Out of $P$ we define two hierarchical posets:

\textbf{Upper neighbour:} Let $P^{+}$ be the poset on $\left[n\right]  $
with the same level decomposition of $P$ and for every $a\in H_{i},b\in H_{j}$ with $a\neq b$ define $a\preceq_{P^+}b$ if, and only if, $i<j$.

\textbf{Lower neighbour: }Let $P^{-}$ be the poset on $\left[  n\right]  $ with level  hierarchy $\left[  n\right]  =J_{1}\cup
J_{2}\cup\cdots\cup J_{h}$ where
%J_{1}  =\left[  N_{2}\right]  =\left[  N_{1}\right]  \text{ and }
%\left[  N_{i-1}+1,N_{i}\right]:=[N_i]\setminus [N_{i-1}]
\begin{equation*}
J_{i}   = \bigcup_{s=j_i}^{j_{i+1}-1} H_s  \ \text{ and } \ j_{h+1}=r+1
\end{equation*}
and for $a\in H_{i},b\in H_{j}$ with $a\neq b$ we have $a\preceq_{P^-}b$ if, and only if, $ i<j$.
It is easy to see that:

\begin{itemize}
\item $P^{+}$ and $P^{-}$ are hierarchical posets and $P$ is hierarchical if, and only if,  $P=P^{+}=P^{-}$;

\item Considering the natural order $\leq$ on $\mathcal{P}_{n}$ we have that
$P^{-}\leq P\leq P^{+}$, moreover 
\begin{align*}
P^{-}  &  =\max\left\{  Q\in\mathcal{P}_{n}|Q\leq P,Q\text{ hierarchical}%
\right\}  ,\\
P^{+}  &  =\min\left\{  Q\in\mathcal{P}_{n}|P\leq Q,Q\text{ hierarchical}%
\right\}.
\end{align*}
\end{itemize}
The proposition below follows directly from the definitions.

\begin{proposition}\label{bound} For any linear code $C$ we have%
\[
 \mathcal{O}_{P^+}\left(  C\right)  \leq \mathcal{O}_{P}\left(  C\right) \leq  \mathcal{O}_{P^-}\left(  C\right).
\]
\end{proposition}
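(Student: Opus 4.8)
The plan is to deduce both inequalities from the monotonicity of minimal complexity already recorded in the Corollary following Theorem~\ref{conj2}, which asserts that $R\leq S$ in $\mathcal{P}_n$ implies $\mathcal{O}_S(C)\leq\mathcal{O}_R(C)$ for every linear code $C$. Informally, enlarging the partial order (adding comparabilities) can only decrease the minimal syndrome-decoding complexity: by Theorem~\ref{conj2} a primary decomposition for the finer poset survives as a decomposition for the coarser one and therefore bounds the latter's optimum from above. So the only work is to instantiate this corollary along the chain $P^{-}\leq P\leq P^{+}$.

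That chain is already in place from the bulleted remarks preceding the statement, where $P^{-}$ is identified as the largest hierarchical poset below $P$ and $P^{+}$ as the smallest hierarchical poset above $P$; in particular $P^{-}\leq P$ and $P\leq P^{+}$ in $(\mathcal{P}_n,\leq)$. Applying the Corollary to the comparison $P\leq P^{+}$ (with $S=P^{+}$) gives $\mathcal{O}_{P^{+}}(C)\leq\mathcal{O}_{P}(C)$, and applying it to $P^{-}\leq P$ (with $R=P^{-}$, $S=P$) gives $\mathcal{O}_{P}(C)\leq\mathcal{O}_{P^{-}}(C)$. Concatenating the two yields exactly $\mathcal{O}_{P^{+}}(C)\leq\mathcal{O}_{P}(C)\leq\mathcal{O}_{P^{-}}(C)$, as claimed.

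I anticipate no genuine obstacle: once the order relations $P^{-}\leq P\leq P^{+}$ are granted and the monotonicity corollary is available, the proposition is a one-line consequence. The only points worth a quick check are bookkeeping ones—verifying the direction of the inequality (a larger poset, i.e.\ more comparabilities, yielding smaller complexity) and confirming that $P^{-}$ and $P^{+}$ genuinely lie in $\mathcal{P}_n$ and are comparable to $P$, both of which are guaranteed by their construction as hierarchical posets on $[n]$ sandwiching $P$. Thus the proof reduces to invoking the Corollary following Theorem~\ref{conj2} twice.
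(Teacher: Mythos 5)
Your proof is correct and matches the paper's intent: the paper states the proposition ``follows directly from the definitions,'' and the argument it has in mind is precisely yours---combine the sandwich $P^{-}\leq P\leq P^{+}$ established in the preceding remarks with the monotonicity corollary of Theorem~\ref{conj2}, applied once to each comparison. Nothing is missing; your instantiation of the corollary (including the direction check that a larger poset gives smaller complexity) is exactly the intended one-line deduction.
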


We remark that due to the Canonical-systematic form, introduced in \cite{felix}, the minimal complexity of syndrome decoding is easy to compute for hierarchical metrics.

 If $P$ is not hierarchical, both inequalities are strict, for some code
$C$. Moreover, the bounds are tight, in the sense that, given $P$, there are codes $C^{+}$ and $C^{-}$ such that $\mathcal{O}_{P^{+}}\left(
C^{+}\right)  =\mathcal{O}_{P}\left(  C^{+}\right)  $ and  $\mathcal{O}_{P^{-}%
}\left(  C^{-}\right)  =\mathcal{O}_{P}\left(  C^{-}\right)  $.

\section{Conclusion}
Using the primary $P$-decomposition of a code, we could compare the syndrome decoding complexity of a code to the complexity of syndrome decoding considering hierarchical posets, the unique family of posets where coding invariants are extensively studied. We expect that the same kind of techniques may be used to establish bounds for other difficult problems. For example, it was proved in \cite{rafael} that the packing radius problem of a single pair of vectors is equivalent to the partitioning problem which is know to be a NP-hard problem. Also in this case, hierarchical posets is the family of poset for which this problems is solved and we expect to establish bounds in a way similar to Proposition \ref{bound}.

\section*{Acknowledgment}

The first author was partially supported by FAPESP, grant 2013/25.977-7.
The second author is supported by CNPq. 

\bibliographystyle{IEEEtran}

\bibliography{IEEEabrv,biliog}

% that's all folks
\end{document}